\documentclass[acmsmall,nonacm,screen]{acmart}
\renewcommand{\footnotetextauthorsaddresses}[1]{}
\AtBeginDocument{\hypersetup{allcolors=blue}}

\usepackage[T1]{fontenc}
\usepackage{mathtools}
\usepackage{bm}
\usepackage{tikz}
\usepackage{forest}
\usetikzlibrary{arrows.meta}
\usepackage{booktabs} 
\usepackage{graphicx}
\usepackage[ruled]{algorithm2e} 

\SetAlFnt{\small}
\SetAlCapFnt{\small}
\SetAlCapNameFnt{\small}
\IncMargin{-\parindent}

\usepackage{amsthm}

\newcounter{exctr}
\renewcommand{\theexctr}{\arabic{exctr}}

\newtheorem{theorem}{Theorem}

\newtheorem{lemma}{Lemma}
\newtheorem{proposition}{Proposition}
\newtheorem{corollary}{Corollary}

\makeatletter
\renewcommand{\thmhead}[3]{%
	\thmheadfont{#1\ #2}%
	\thm@notefont{: [#3]}%
}
\makeatother

\newcommand{\E}{\mathbb E}

\newcommand{\Prb}{\mathbb{P}}

\newcommand{\IA}{\mathrm{IA}}
\newcommand{\DA}{\mathrm{DA}}
\newcommand{\rk}{\operatorname{rk}}

\usepackage{microtype}
\setcitestyle{authoryear}

\title{\noindent
Asymptotic Equivalence of Immediate and Deferred Acceptance
}

\author{Josu\'e Ortega}
\affiliation{%
	\institution{Queen's University Belfast}
	\country{UK}	}

\addtocontents{toc}{\protect\setcounter{tocdepth}{-1}}

\begin{abstract}
	\vspace*{2em}
Immediate Acceptance (IA, also known as the Boston mechanism) is commonly used to assign students to schools because, despite its manipulability and instability, it produces a Pareto-efficient matching with truthful reports, unlike student-proposing Deferred Acceptance (DA). 
In this paper, we ask: does IA produce meaningfully better average ranks than DA, conditional on truth-telling? 
We show that, in i.i.d.\ one-to-one random markets, IA's expected average rank is asymptotically $\log n$, just like DA's. 
Therefore, IA’s Pareto efficiency does not translate into a first-order improvement in student placement.
This conclusion extends to variations of IA as well as to many-to-one markets.

\vspace*{1em}
\noindent {\bf Keywords}: immediate acceptance, Boston mechanism, average rank.\\
\vspace*{17em}
\end{abstract}
{\protect\setcounter{tocdepth}{3}}

\begin{document}

\begin{titlepage}

\maketitle


\thispagestyle{empty}
\end{titlepage}

\setlength{\parskip}{6pt plus 2pt minus 1pt}

\section{Introduction}
In a school choice problem, an education authority allocates scarce school seats among students, who submit preferences over schools and receive priorities according to predetermined criteria such as proximity, sibling attendance, or academic performance. The central design question is which mechanism should be used to allocate these seats. 

Immediate Acceptance (IA) is a natural candidate to do this. Students apply to their most preferred school, schools decide among competing applications on the basis of priorities, rejecting all who apply once the school is full, and rejected students apply to their most preferred school that has not rejected them. This mechanism is central to the literature because it is widely used and produces a Pareto-efficient outcome under truthful preference revelation. A well-known modification of this algorithm allows schools to consider applications even when they are already full, so that schools can reject previously accepted students in favour of higher-priority late applicants. This modified algorithm is student-proposing Deferred Acceptance (DA), which is strategy-proof and respects priorities, but is Pareto-inefficient with high probability \citep{ortega2026large}. 

Although IA is Pareto-efficient and DA is not, conditional on truth-telling, we do not know how much better IA's placement is in quantitative terms. We tackle this question in this paper. We show that the two mechanisms grow at the same rate: as the market grows, both assign the average student a school of rank asymptotic to $\log n$, so IA's efficiency yields no first-order advantage.

To be precise, we consider a random market with $n$ students and $n$ schools in which students' preferences are drawn uniformly and independently at random, with no assumption made on how priorities are formed. In this random market, classical and influential results from the 1970s by \citet{wilson1972} and \citet{knuth1976} show, through beautiful and elementary arguments, that DA produces an expected average rank for students of $\log n$.\footnote{These results were later strengthened by \cite{pittel1989}, using more advanced techniques.} This means that the average placement grows in large markets (slowly, but grows), which is in sharp contrast with the constant expected average rank that can be achieved by some Pareto-efficient mechanisms \citep{nikzad2022rank, ortega2023cost}. 

Let us explain Wilson's and Knuth's seminal ideas before we proceed. DA, as originally formulated by \citet{gale1962}, has students apply to schools simultaneously, but this is not necessary: one can instead order the students arbitrarily and let each propose to her favourite school that has not yet rejected her, with schools accepting applicants as they arrive and rejected students returning to any position in the queue. \citet{mcvitie1971} show that, for every order of students, this sequential algorithm is equivalent to Gale-Shapley's. Since DA terminates once every school has received at least one application, and since each student applies from her most preferred school to the least, the total number of applications directly determines the average rank. That total number of applications mirrors the classical coupon-collector problem: if each breakfast cereal box contains one of $n$ equally likely coupons, how many cereal boxes do I need to buy to collect all $n$ coupons? Said differently, if each application targets one of $n$ schools randomly, how many applications do I need for each school to receive at least one application?
The answer is $nH_n$, where $H_n \coloneqq \sum_{i=1}^{n} \tfrac{1}{i}$ is the $n$-th harmonic number.\footnote{This is an almost-deterministic result, since the expected collection time is sharply concentrated around its expected value \citep{motwani1995}.} The average student therefore makes about $H_n \sim \log n$ applications and is assigned a school of rank roughly $\log n$.
This argument yields only an upper bound: coupon collection permits a school to appear more than $n$ times. Knuth's contribution is to show that this difference is immaterial at leading order. Deleting repeated draws could in principle let students finish much faster, but most of the coupon collector's time is spent obtaining the last few schools, and by that stage students have explored so little of their lists that repetitions are rare. Therefore, DA's expected average rank is asymptotically 
$\log n$, matching the upper bound.

Nonetheless, the expected average rank remained unknown, because IA has a different and seemingly more complex structure than DA. In particular, in IA applications cannot be made sequentially in arbitrary order because a student may be rejected by a school irrespective of her priority simply because she applied too late. Instead, to analyze IA's average rank, \cite{pritchard2023asymptotic} obtain recursive limiting values of the probability that a student is assigned to her $k$-th-ranked school. Yet,  IA's expected average rank remains unknown. We settle this question with a different and simpler approach, in the spirit of Wilson and Knuth. We show that IA can be equivalently implemented by a sequential algorithm in which students apply to schools one at a time, in any given order, as long as students are not allowed to apply again until every student active in the round has applied once. Then, once the batch of students has applied, schools make acceptance decisions on the basis of their priorities, and unlucky students can apply again.

Using this sequential version of IA restores Wilson's coupon-collector analogy. For each student, generate an infinite sequence of independent uniform draws from the schools and read the sequence until the first school she has not previously tried appears; that school becomes her next application. Pooling all raw draws produces an ordinary coupon collector, and the first global appearance of every school must generate an IA application. The expected number of applications made before every school has received one is therefore at most $nH_n$. Completing the final round adds at most $n$ further applications, yielding the upper bound $nH_n+n$.

The lower bound shows that discarding repeated draws cannot save a first-order fraction of the coupon-collection time. Consider the coupled raw-draw process until at most
\[
a_n=\left\lceil(\log n)^2\right\rceil
\]
schools remain unseen. The number of raw draws required to reach this threshold is $n\log n-o(n\log n)$. Let
\[
s_n=\left\lceil\frac{2n}{\log n}\right\rceil,
\]
and let $A_n$ be the event that IA reaches the threshold by the end of round $s_n$.

On $A_n$, every application made before the threshold comes from a student who has previously tried at most $s_n-1$ schools. The conditional expected number of raw draws needed to produce each such application is therefore at most
\[
\frac{1}{1-s_n/n}=1+o(1).
\]
Writing $p_n=\Prb(A_n)$, this event contributes at least
\[
\bigl(p_n-o(1)\bigr)n\log n
\]
applications in expectation.

On $A_n^c$, more than $a_n$ schools remain unseen, and hence vacant, throughout the first $s_n$ rounds. Because the market is balanced, at least $a_n$ students remain unmatched and apply in each of those rounds. IA therefore makes at least
\[
a_ns_n=(2+o(1))n\log n
\]
applications on this event. Combining the two cases gives
\[
\frac{\E[T_n^{\IA}]}{n\log n}
\ge
p_n+2(1-p_n)-o(1)
\ge
1-o(1).
\]
Together with the upper bound, this proves that IA's expected average rank is asymptotic to $\log n$. The raw-draw coupling follows Knuth's idea, but the quick-versus-slow argument is specific to IA: if coverage is quick, discarded repetitions are negligible; if coverage is slow, IA's round structure itself forces sufficiently many applications.

Therefore, while IA may give students more preferred schools than DA, its advantage becomes small compared with the overall level of average rank as the market grows.

In the Appendix, we show that the same leading-rank result holds for a IA-variant in which students skip already full schools, sometimes called IA with skips or adaptive Boston \citep{miralles2009school,harless2019immediate,mennle2021partial}. We also show that IA and DA share the same leading term in many-to-one markets with bounded quotas. In particular, with $m$ schools of common quota $q$, both mechanisms have expected average rank asymptotic to $(\log m)/q$.

\paragraph{Positioning in the Literature} \citet{abdulkadirouglu2003} first document IA usage in the city of Boston and comment on its shortcomings with regard to manipulation and justified envy empirically and theoretically. Since then, a large literature has studied its properties in theory, in the lab and in the field. 

Every equilibrium of IA's preference revelation game is weakly Pareto dominated by DA \citep{ergin2006}, and therefore IA's average rank can be significantly worse when accounting for strategic behavior \citep{ortegascwe}. Interestingly, in settings with cardinal preferences and incomplete information, \citet{miralles2009school}, \citet{abdulkadirouglu2011resolving}, \citet{featherstone2016boston} and \citet{akyol2024bayesian} show that IA can dominate DA in ex-ante and interim welfare. And there is evidence that switching from IA to DA would reduce welfare in practice \citep{he2015gaming,calsamiglia2020structural,agarwal2018demand}.

The closest paper to ours is \citet{pritchard2023asymptotic}. They study IA under the same i.i.d.\ preference environment, but impose a common priority order across schools. For every fixed rank \(k\), they derive limiting assignment probabilities, including how these probabilities vary with a student's position in the priority order. For IA, the aggregate rank shares are summarized by the recursion
\[
\omega_1=1,
\qquad
\omega_{k+1}=\omega_k e^{-\omega_k},
\]
where $\omega_k$ is the limiting fraction of students who remain unmatched at the beginning of round $k$, and $\omega_k-\omega_{k+1}$ is therefore the limiting fraction assigned their $k$-th choice. In particular, the limiting first-choice share is $1-e^{-1}\approx0.63$.

These fixed-rank limits do not determine expected average rank. They describe the probability of receiving each fixed rank as $n$ grows, but do not control the tail at ranks that themselves grow with $n$. Because $\omega_k\sim 1/k$, this tail is not summable, and a vanishing fraction of students assigned to increasingly poor ranks can determine both the order and the leading coefficient of average rank.
Based on simulations, Pritchard and Wilson conjecture that IA and IA with skips have expected average ranks of order $\log n$, but with leading constants strictly below that of Serial Dictatorship. Theorem~\ref{thm:main} and Theorem~\ref{prop:skip} show instead that all three leading coefficients equal one. Our results also hold uniformly over arbitrary school-priority profiles.

\section{Model and Main Result}
\label{sec:model}

A school-choice problem $P$ consists of a set of students $I$ and a set of schools $S$. Throughout the main analysis, $|I|=|S|=n$ and every school has one seat. Each student $i\in I$ has a strict preference $\succ_i$ over the schools, and each school $s\in S$ has a strict priority $\triangleright_s$ over the students.

A matching $\mu$ is a bijection from $I$ to $S$. We denote by $\mu_i$ the school assigned to student $i$. The rank function
\[
\rk_i(s)=|\{s'\in S:s'\succ_i s\}|+1
\]
assigns rank one to student $i$'s most-preferred school and rank $n$ to her least-preferred school.

The Immediate Acceptance mechanism (IA) proceeds in rounds. In the first round, every student applies to her most-preferred school. Each school permanently accepts its highest-priority applicant and rejects the rest. In every subsequent round, each rejected student applies to the next school in her preference list, and each school that remains vacant permanently accepts its highest-priority applicant in that round. We use $\IA(P)$ to denote the resulting matching and $\IA_i(P)$ to denote the school assigned to student $i$.

The student-proposing Deferred Acceptance mechanism (DA) differs only in that acceptances are tentative. Each school holds its highest-priority applicant and may later displace that student when a higher-priority applicant arrives. We use $\DA(P)$ and $\DA_i(P)$ analogously.

For a mechanism $A\in\{\IA,\DA\}$, define its average student rank in problem $P$ by
\begin{equation}\label{eq:average-rank}
	\overline{\rk}^{A}(P)
	=\frac1n\sum_{i\in I}\rk_i[A_i(P)].
\end{equation}
We study a random market in which student preferences are drawn independently and uniformly from the \(n!\) strict orders of the schools. School priorities are arbitrary and fixed. Let \(\Pi_n\) denote the set of all strict priority profiles, and let \(\E_{\triangleright}\) denote expectation over student preferences for a fixed priority profile \(\triangleright\in\Pi_n\). When no confusion can arise, we suppress the problem \(P\) and write \(\overline{\rk}^{A}_n\). With this notation, we can now state our main result.

\begin{theorem}
	\label{thm:main}
	For every sequence of school-priority profiles \((\triangleright_n)\),
	
	\[
	\lim_{n\to\infty}
	\frac{
		\E_{\triangleright_n}
		\left[\overline{\rk}^{\IA}_n\right]
	}{
		\log n
	}
	=1.
	\]
\end{theorem}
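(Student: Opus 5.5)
Since each student's rank equals the number of applications she makes, $n\,\overline{\rk}^{\IA}_n = T_n^{\IA}$, the total number of applications in IA. It therefore suffices to show $\E_{\triangleright_n}[T_n^{\IA}] = (1+o(1))\,n\log n$ uniformly in priorities. The first step is a sequential reformulation of IA. Fix any order of students. In each round, every currently unmatched student submits one application to her next untried school. After the whole batch has applied, each vacant school accepts its highest-priority applicant of that round. This reproduces IA exactly, because acceptance in a round depends only on the set of applicants to each school in that round. By the principle of deferred decisions, each application goes to a uniformly random school among those the student has not yet tried, independently of the past. This holds whatever the priorities are, which is what gives uniformity over $(\triangleright_n)$.

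For the upper bound I couple applications with raw draws. Each student carries an infinite i.i.d. uniform sequence over $S$, and her next application is the first draw in her sequence that she has not already tried. Pooling all raw draws in the order they are consumed gives an ordinary coupon-collector sequence. The first global appearance of a school is necessarily an application to a school no one has applied to before, so that school is still vacant. Once every school has received an application, every school is filled by the end of that round. Hence the number of applications before all schools have been hit is at most the coupon-collection time, which has expectation $nH_n$. Finishing the current round adds at most $n$ applications, so $\E[T_n^{\IA}]\le nH_n+n$.

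For the lower bound, set $a_n=\lceil(\log n)^2\rceil$ and $s_n=\lceil 2n/\log n\rceil$. Let $A_n$ be the event that, by the end of round $s_n$, at most $a_n$ schools remain never-applied-to.

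On $A_n^c$, more than $a_n$ schools are vacant throughout the first $s_n$ rounds. Since the market is balanced, at least $a_n$ students are unmatched and apply in every one of those rounds. So $T_n^{\IA}\ge a_ns_n=(2+o(1))n\log n$ on this event.

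On $A_n$, let $R$ be the number of raw draws needed until at most $a_n$ schools are unseen. By the coupon-collector computation, $\E R = n(H_n-H_{a_n}) = n\log n - O(n\log\log n)$, and $R$ is concentrated by the standard variance bound. Every application before the threshold is made by a student who has tried at most $s_n-1$ schools. Given the past, the number of raw draws she consumes is geometric with mean at most $1/(1-s_n/n)=1+o(1)$. A Wald-type argument, applied to the stopped process with the indicator of $A_n$ handled through $\E[R\,\mathbf 1_{A_n}]\ge p_n\,n\log n - o(n\log n)$, where $p_n=\Prb(A_n)$, gives $\E[T\,\mathbf 1_{A_n}]\ge (p_n-o(1))n\log n$. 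Combining the two events,
\[
\E[T_n^{\IA}]\ge \bigl(p_n+2(1-p_n)-o(1)\bigr)n\log n\ge (1-o(1))\,n\log n .
\]
Dividing by $n\log n$ and using the upper bound yields the limit $1$.

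The main obstacle is making the $A_n$ step rigorous. $A_n$ is not a stopping-time event aligned with the raw-draw count, so the Wald argument must be applied to the process stopped at the earlier of the threshold time and the end of round $s_n$, with the per-application geometric bound valid only up to that time. The difference $\E[R\,\mathbf 1_{A_n}]$ versus $p_n\E R$ must also be controlled. I would try first to use concentration of $R$, namely $\operatorname{Var}R=O(n^2/a_n)$, so that $R\ge (1-\epsilon)n\log n$ with probability $1-o(1)$. Then $\E[R\,\mathbf 1_{A_n}]\ge (1-\epsilon)n\log n\,(p_n-o(1))$, and the stopped Wald identity gives the applications bound directly.
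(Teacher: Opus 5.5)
Your proposal is correct and is essentially the paper's proof. It uses the same round-respecting sequential implementation, the same upper bound $nH_n+n$ (your raw-draw coupling is the argument in the paper's introduction; its formal proof uses the equivalent hazard bound $U/(n-K)\ge U/n$), and the same amnesiac lower bound with the same $a_n$, $s_n$ and quick/slow split.

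The paper sidesteps the obstacle you flag on $A_n$ by conditioning on the sigma-field of induced preference orders and IA history. Under that conditioning the raw-draw gaps are independent geometrics with mean $n/(n-K_j)$, and $A_n$ is measurable. Your stopped-Wald route also closes the argument. The raw-draw count $R'$ of your stopped process equals $R$ on $A_n$ and is at least $a_ns_n$ on $A_n^c$, which gives $\E[T_n^{\IA}]\ge(1-s_n/n)\,\E[\min(R,a_ns_n)]\ge(1-o(1))\,n\log n$ directly, without isolating $A_n$.
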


The result does not depend on how priorities vary across markets. Priorities affect which students remain active and therefore the realized application path, but the coupon-collection bounds hold after every possible history.

A direct implication of Theorem~\ref{thm:main}, together with the aforementioned results by \citet{wilson1972, knuth1976} and \citet{pittel1989}, is the asymptotic equivalence of expected average ranks under IA and DA.
\begin{corollary}
	\label{cor:DA}
	For every sequence of school-priority profiles $(\triangleright_n)$,
	
	\[
	\lim_{n\to\infty}
	\frac{
		\E_{\triangleright_n}
		\left[\overline{\rk}^{\IA}_n\right]
	}{
		\E_{\triangleright_n}
		\left[\overline{\rk}^{\DA}_n\right]
	}
	=1.
	\]
\end{corollary}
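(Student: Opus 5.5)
The corollary follows by dividing two asymptotic statements, so the plan is to pin down each numerator and denominator separately and then take the ratio. For the numerator I will invoke Theorem~\ref{thm:main} directly. For every sequence $(\triangleright_n)$ it gives
\[
\E_{\triangleright_n}\bigl[\overline{\rk}^{\IA}_n\bigr]=(1+o(1))\log n .
\]

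For the denominator I need the matching statement for DA, and it must hold uniformly over priority profiles. Here I will use the Wilson--Knuth argument, which the introduction already sketches. By the McVitie--Wilson equivalence, DA can be run sequentially in any order. Its total number of applications $T_n^{\DA}$ equals $\sum_i \rk_i[\DA_i]$, and this is dominated by a coupon-collector time. That gives the upper bound $\E[T_n^{\DA}]\le nH_n$. Knuth's lower bound handles repeated draws and gives $\E[T_n^{\DA}]\ge (1-o(1))n\log n$. Both bounds hold conditional on every history of rejections, because each fresh application goes to a uniformly random school the student has not yet tried, whatever the priorities are. So the bounds hold for every fixed $\triangleright$, with error terms that do not depend on $\triangleright$. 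This is the one point I need to state carefully: the cited results are sometimes phrased with uniformly random priorities, so I have to note that the argument, or Pittel's, does not use the priority distribution. Granting that, $\E_{\triangleright_n}[\overline{\rk}^{\DA}_n]=(1+o(1))\log n$ for every sequence $(\triangleright_n)$.

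If I want to avoid relying on the literature for the lower bound, there is a self-contained route. The same quick-versus-slow argument used for IA works for DA: on the slow event, at least $a_n$ schools stay unapplied-to, so at least $a_n$ students keep applying. The simplest route, though, is to cite the classical results together with the observation about uniformity in priorities.

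Finally, I divide the two expansions. Both expectations are at least $1$ and are $(1+o(1))\log n\to\infty$, so the ratio tends to $1$. The main obstacle is not analytic. It is justifying that the DA asymptotics hold uniformly over arbitrary priority sequences, which I handle through the history-independence of the proposal draws.
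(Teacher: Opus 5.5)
Your proposal is correct and follows the paper's own route: the corollary is obtained by dividing Theorem~\ref{thm:main} by the classical Wilson--Knuth--Pittel asymptotics $\E_{\triangleright_n}[\overline{\rk}^{\DA}_n]\sim\log n$. Your caution about uniformity in priorities is justified, since history-independence of fresh draws gives only the upper bound $nH_n$ (how much is saved by skipping repeated draws does depend on priorities); your self-contained fallback is what the paper provides in Proposition~\ref{prop:manytoone} with all quotas equal to one, where the quick-versus-slow argument gives the DA lower bound for arbitrary priority profiles.
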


Corollary~\ref{cor:DA} is a relative first-order statement: the difference between the two expected average ranks is $o(\log n)$, but the result does not assert that their additive difference converges to zero.

We proceed to prove Theorem~\ref{thm:main} in the next Section. 
\section{Proof of Theorem \ref{thm:main}}
\label{sec:sequential}

\subsection{A sequential implementation}
\label{sec:sequential-implementation}

Fix a realized school-choice problem. At the beginning of each IA round, order the unmatched students in any way and reveal their applications one at a time. Schools make no decisions while these applications are being revealed. Once every student who was active at the beginning of the round has applied, each vacant school permanently accepts its highest-priority applicant in that round. The rejected students become active in the next round.

This sequential implementation produces exactly the same matching as the usual simultaneous description of IA. Revealing applications one at a time does not change the set of students applying to each school in a round, and priorities therefore select the same winners. Sequential IA is only an accounting device.

The connection with Wilson's sequential implementation of DA is immediate. Wilson allows proposals to be processed one at a time and lets a rejected student propose again whenever she returns to the queue. IA can also be processed one application at a time, but its rounds must be respected: nobody may apply twice until every student active in the round has applied once. We now show that this restriction does not change the leading coupon-collection cost.

Let $T_n^{\IA}$ denote the total number of genuine IA applications. A student assigned her $r$-th choice makes exactly $r$ applications. Hence
\begin{equation}\label{eq:rank-identity}
	T_n^{\IA}
	=\sum_{i\in I}\rk_i[\IA_i(P)]
	=n\overline{\rk}^{\IA}_n.
\end{equation}
Thus, finding the expected average rank is equivalent to counting applications.

\subsection{The upper bound}
\label{sec:upper}

Call a school \emph{unseen} if it has not yet received any application. Immediately before an application is revealed, let $U$ be the number of unseen schools, let $K$ be the number of schools already tried by the current student, and let $\mathcal F$ contain the complete revealed history.

\begin{lemma}\label{lem:hazard}
	Conditional on $\mathcal F$,
	\begin{equation}\label{eq:hazard}
		\Prb(\text{the next application reaches an unseen school}\mid\mathcal F)
		=\frac{U}{n-K}
		\ge\frac{U}{n}.
	\end{equation}
\end{lemma}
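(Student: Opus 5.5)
The plan is to exploit the principle of deferred decisions: a student's preference list is a uniformly random permutation, independent across students and independent of the fixed priority profile. The revealed history $\mathcal F$ consists of all applications made so far. For the current student, the only information $\mathcal F$ contains about her own ordering is the identity and order of the $K$ schools she has already tried. The first step is to check this carefully. Her applications are exactly the first $K$ entries of her list, whether or not they led to acceptance. Which schools accepted or rejected her is a deterministic function of all students' revealed prefixes and the priorities. The decision of who is active, and in what order they apply, is likewise a function of $\mathcal F$. Hence conditioning on $\mathcal F$ amounts to conditioning each student's permutation on its revealed prefix. Independence across students lets us ignore the other students' prefixes.

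The second step is the conditional distribution itself. For a uniform permutation, given that its first $K$ entries are a specified sequence, the $(K+1)$-th entry is uniform over the remaining $n-K$ schools. So the next application is uniform on the set of schools the student has not tried.

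The third step is counting. Every unseen school has received no application from anyone, so in particular none from the current student. The $U$ unseen schools therefore all lie among her $n-K$ untried schools. The conditional probability of hitting one of them is exactly $U/(n-K)$. The inequality $U/(n-K)\ge U/n$ follows because $K\ge 0$ and $n-K\ge 1$. The latter holds because an active student is unmatched, which in a balanced one-to-one market means some school is still vacant, and since schools she tried earlier rejected her or filled, she still has untried schools available; in any case $K<n$ whenever an application is made.

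The main obstacle is the first step: justifying that the round structure and the arbitrary within-round ordering leak no further information about the current student's unrevealed tail. I would handle this by making the ordering rule an explicit function of $\mathcal F$ and any independent randomness, which keeps the argument fully rigorous. With that in place, the stopping-time structure does not matter, because we condition on the full realized history at a fixed application index.
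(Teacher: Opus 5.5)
Your proposal is correct and matches the paper's proof: the $U$ unseen schools all lie in the current student's untried suffix, and conditional on her revealed prefix her next school is uniform over the $n-K$ untried schools. Your extra checks make explicit what the paper leaves implicit, namely that $\mathcal F$ reveals nothing beyond each student's prefix (with the within-round order a function of the history and independent randomness) and that $n-K\ge 1$ for an active student.
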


\begin{proof}
	Every unseen school belongs to the current student's unexposed preference-list suffix: if she had previously applied to it, the school would no longer be unseen. Conditional on the exposed prefix of a uniformly random preference order, her next school is uniform among the $n-K$ schools she has not tried. Exactly $U$ of those schools are unseen.
\end{proof}

Let $D_n^{\IA}$ be the number of genuine applications required until every school has received at least one application. When $u$ schools remain unseen, Lemma~\ref{lem:hazard} implies that the next application discovers a new school with probability at least $u/n$. The waiting time to reduce the number of unseen schools from $u$ to $u-1$ is therefore stochastically dominated by a geometric random variable with mean $n/u$. Consequently,
\begin{equation}\label{eq:discovery-upper}
	\E[D_n^{\IA}]
	\le\sum_{u=1}^n\frac{n}{u}
	=nH_n.
\end{equation}

When the final unseen school receives its first application, IA may still need to reveal the remaining applications in that round before schools make their decisions. This adds at most $n$ applications. Therefore
\begin{equation}\label{eq:upper}
	T_n^{\IA}\le D_n^{\IA}+n
	\qquad\text{and hence}\qquad
	\E[T_n^{\IA}]\le nH_n+n.
\end{equation}
Together with \eqref{eq:rank-identity}, this proves the upper bound in Theorem~\ref{thm:main}.

\subsection{The lower bound}
\label{sec:lower}

The previous argument compares IA with a coupon collector that allows repetitions. A genuine IA student never applies twice to the same school, so IA can be faster. The lower bound shows that this advantage changes only lower-order terms.

For every student $i$, generate an infinite sequence
\[
X_{i1},X_{i2},\ldots
\]
of independent uniform draws from the schools. Construct student $i$'s preference order from the order in which distinct schools first appear in this sequence. Whenever sequential IA calls student $i$, read her sequence until the first school that has not appeared previously in her own stream. That school is her next genuine application. We call every raw draw, including a repeated school that is ignored, an \emph{amnesiac tick}.

At each tick, the unread draw in the selected student's stream is independent and uniform over the schools. The pooled sequence of ticks is therefore an ordinary i.i.d.\ coupon collector, even though the identity of the student whose stream is read depends on the IA history. Moreover, the first global appearance of any school is necessarily a genuine application to that school. Thus, the schools unseen by the amnesiac clock are exactly the schools that have never received an IA application.

Let $C_{n,a}$ denote the number of amnesiac ticks required until at most $a$ schools remain unseen. This is the usual partial coupon-collection time, and therefore
\begin{equation}\label{eq:partial-coupon}
	C_{n,a}\ \overset{d}=\
	\sum_{u=a+1}^{n}G_u,
	\qquad
	G_u\sim\operatorname{Geom}\!\left(\frac{u}{n}\right),
\end{equation}
where the summands are independent. It follows that
\begin{align}
	\E[C_{n,a}]
	&=n(H_n-H_a), \label{eq:partial-mean}\\
	\operatorname{Var}(C_{n,a})
	&\le n^2\sum_{u>a}\frac1{u^2}
	=O\!\left(\frac{n^2}{a}\right). \label{eq:partial-var}
\end{align}

Set
\begin{equation}\label{eq:cutoffs}
	a_n=\left\lceil(\log n)^2\right\rceil,
	\qquad
	s_n=\left\lceil\frac{2n}{\log n}\right\rceil.
\end{equation}
Let $A_n$ be the event that IA reaches at most $a_n$ unseen schools by the end of round $s_n$, and let $T_{n,a_n}^{\IA}$ be the number of genuine applications made up to and including the application at which this threshold is first reached.

We first consider $A_n$. Enumerate the genuine applications made up to this threshold by $j=1,\ldots,T_{n,a_n}^{\IA}$. Let $K_j$ be the number of schools previously tried by the student making application $j$, and let $W_j$ be the number of amnesiac ticks required to reveal that application. Every application counted on $A_n$ is made no later than round $s_n$, so $K_j\le s_n-1$.

Let $\mathcal H_n$ be the sigma-field generated by the induced preference orders and the resulting IA history. Conditional on $\mathcal H_n$, the event $A_n$, the application path, and the values $K_j$ are fixed. The first-appearance decomposition of an i.i.d.\ stream implies that the waiting time before the next new school remains geometric after conditioning on the induced order: after $K_j$ distinct schools have appeared, a raw draw is new with probability $(n-K_j)/n$, while the identity of the new school is uniform among the remaining schools and independent of the waiting time. Hence
\[
\E[W_j\mid\mathcal H_n]=\frac{n}{n-K_j}.
\]
Since $C_{n,a_n}=\sum_{j=1}^{T_{n,a_n}^{\IA}}W_j$ on $A_n$, summing these conditional expectations yields
\begin{equation}\label{eq:amnesia-comparison}
	\E[C_{n,a_n}\mathbf 1_{A_n}]
	\le
	\frac{1}{1-s_n/n}
	\E[T_{n,a_n}^{\IA}\mathbf 1_{A_n}].
\end{equation}

Equations~\eqref{eq:partial-mean}-\eqref{eq:partial-var} imply
\begin{equation}\label{eq:partial-L1}
	\E\left|
	\frac{C_{n,a_n}}{n\log n}-1
	\right|\longrightarrow0.
\end{equation}
Indeed,
\[
\E[C_{n,a_n}]
=n\log n-2n\log\log n+O(n),
\]
while its standard deviation is $O(n/\log n)$. Writing $p_n=\Prb(A_n)$, equations~\eqref{eq:amnesia-comparison} and \eqref{eq:partial-L1} give
\begin{equation}\label{eq:amnesia-good-event}
	\E[T_n^{\IA}\mathbf 1_{A_n}]
	\ge
	\E[T_{n,a_n}^{\IA}\mathbf 1_{A_n}]
	\ge
	\bigl(p_n-o(1)\bigr)n\log n.
\end{equation}

We next consider $A_n^c$. More than $a_n$ schools then remain unseen throughout each of the first $s_n$ rounds. Every unseen school is vacant. Because the market is balanced, at least $a_n$ students remain unmatched in each of those rounds, and each of them makes one application. Hence
\begin{equation}\label{eq:amnesia-bad-event}
	T_n^{\IA}
	\ge a_ns_n
	=(2+o(1))n\log n
	\qquad\text{on }A_n^c.
\end{equation}

Combining \eqref{eq:amnesia-good-event} and \eqref{eq:amnesia-bad-event},
\[
\frac{\E[T_n^{\IA}]}{n\log n}
\ge p_n+2(1-p_n)-o(1)
\ge1-o(1).
\]
Together with the upper bound \eqref{eq:upper} and the rank identity \eqref{eq:rank-identity}, this proves Theorem~\ref{thm:main}.

\section{Finite-market simulations}
\label{sec:simulations}

To illustrate the finite-market content of Corollary~\ref{cor:DA}, we simulate IA and DA on the same random markets. For each $n\in\{500,1000,\ldots,10000\}$, we draw 500 markets with independently and uniformly distributed student preferences and independent uniformly random priority orders across schools. The same realized preference and priority profile is used for both mechanisms in each replication.

\begin{figure}[t]
    \centering
    \includegraphics[width=\textwidth]{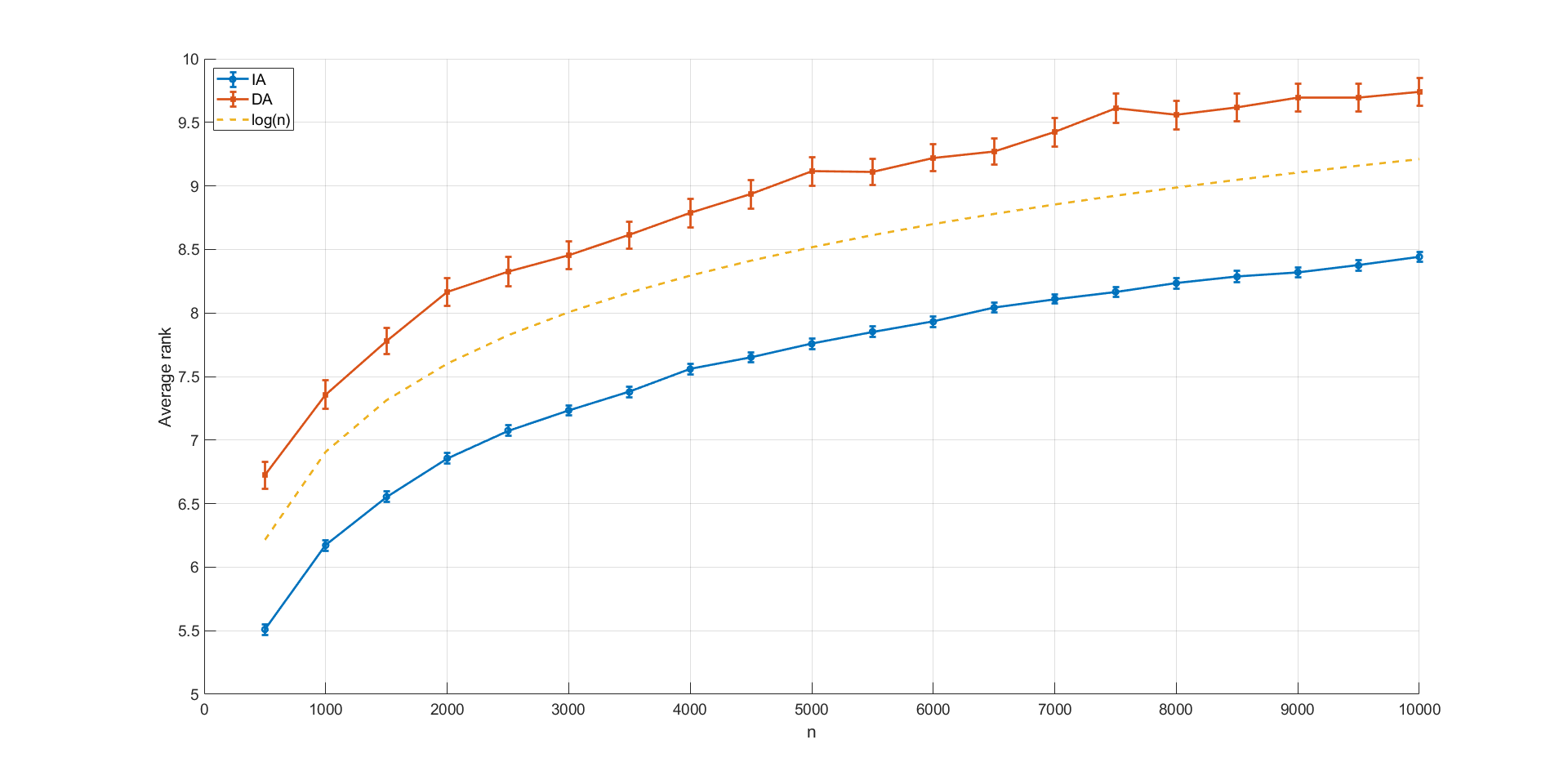}
    \Description{IA and DA's average rank as a function of $n$ (mean over $500$ random markets per $n$).}
    \caption{IA and DA's average rank as a function of $n$ (mean over $500$ random markets per $n$). Bars are 95 percent Monte Carlo confidence intervals.}
    \label{fig:simulations}
\end{figure}

Figure~\ref{fig:simulations} shows a visible finite-market rank advantage for IA. At $n=10{,}000$, the simulated expected average ranks are $8.44$ under IA and $9.74$ under DA, a ratio of $0.867$. Normalizing by $\log n$ gives $0.917$ and $1.057$, respectively. This is the distinction captured by the theorem: the ratio converges to one and the difference is $o(\log n)$, but lower-order differences may remain quantitatively relevant at the market sizes shown.
\section{Extensions}
\label{sec:extensions}

\subsection{IA with skips}
\label{sec:skip}

In the main text, we studied IA when students are allowed to apply to schools that are already full but which will always reject them. A different interpretation of IA in the literature skips those futile applications, so that students can only apply to schools that still have seats by the time they apply to them \citep{miralles2009school,mennle2021partial,harless2019immediate}. We denote this IA modification by $\IA^{\mathrm{skip}}$.
Under this variant, the number of applications no longer equals the rank of the final assignment, because a student may inspect and skip several filled schools before applying. The sequential approach nevertheless continues to work if we count preference-list positions inspected rather than applications.

\begin{theorem}\label{prop:skip}
	For every sequence of school-priority profiles $(\triangleright_n)$,
	\begin{equation}\label{eq:skip-result}
		\lim_{n\to\infty}
		\frac{
			\E_{\triangleright_n}
			\left[\overline{\rk}^{\IA^{\mathrm{skip}}}_n\right]
		}{
			\log n
		}
		=1.
	\end{equation}
\end{theorem}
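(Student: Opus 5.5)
I would mirror the proof of Theorem~\ref{thm:main}, replacing ``applications'' by \emph{inspections}. An inspection is a preference-list position that a student reads. Let $T_n^{\mathrm{skip}}$ be the total number of inspections. Each student stops at the position of the school she is finally assigned, so
\[
T_n^{\mathrm{skip}}=\sum_{i\in I}\rk_i[\IA^{\mathrm{skip}}_i(P)]=n\,\overline{\rk}^{\IA^{\mathrm{skip}}}_n,
\]
and it suffices to show $\E[T_n^{\mathrm{skip}}]\sim n\log n$. I would use the same sequential implementation: in each round, active students are processed one at a time. Each student inspects her list until she reaches a school that was vacant at the start of the round (the filled set only changes between rounds), and then applies to it. Schools decide at the end of the round.

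\textbf{Upper bound.} A school is \emph{unseen} if no student has yet inspected it. An unseen school was never applied to, so it is vacant, and hence it is never skipped. Conditional on the history, the next inspected position is uniform among the $n-K$ schools the current student has not yet inspected. Every unseen school lies in that set, so the next inspection hits an unseen school with probability $U/(n-K)\ge U/n$, exactly as in Lemma~\ref{lem:hazard}. The coupon-collector argument then gives an expected number of inspections at most $nH_n$ until every school has been inspected. At that moment every school has received an application, or will receive one in the current round, since the first inspection of a vacant school is an application. So all schools fill by the end of the current round. Completing that round costs additional inspections, which I must bound by $O(n)$ in expectation. Within the round, a student inspecting from a fixed set of $F$ filled schools (out of $n-K$ remaining) needs on average $(n-K)/(n-K-F)$ inspections to reach a vacant one. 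This is exactly where the obstacle lies, since late in the process almost all schools are filled.

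To handle this, I would not split off the last round. Instead I would count inspections per round differently. Every inspection either hits an unseen school or hits a seen one. Inspections of seen schools are the ``wasted'' ones, and they are the ones to control. Since every school is seen after the coupon phase, the final round's inspections are all of seen schools. I would bound them by applying coupon-collection with $U$ replaced by the number of vacant schools $V\ge U$: each inspection hits a vacant school with probability at least $V/n$. Each such hit yields an application that fills a school at round end, or, if several students hit the same school, one of them is accepted. Tracking the number of vacant schools at round starts, the total inspections are dominated by $\sum_{v} n/v$ over successive decrements, giving $nH_n+O(n)$. The hardest step is making this decrement argument rigorous across rounds. My first try would be to charge each inspection to the current value of $V$ at round start and note that within a round $V$ is fixed while each student needs geometric$(\ge V/n)$ time. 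Since at most $V$ students are active, one round costs at most $V\cdot n/V=n$. Combining this with the unseen-school bound for all rounds but the last gives $nH_n+n$.

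\textbf{Lower bound.} I would copy the amnesiac-tick argument verbatim, with inspections in place of applications. The first global appearance of a school in the pooled raw stream is its first inspection. The pooled stream is i.i.d.\ uniform, so \eqref{eq:partial-coupon}--\eqref{eq:partial-L1} apply unchanged. The comparison \eqref{eq:amnesia-comparison} needs $K_j\le s_n-1$ on $A_n$, but with skips a student may inspect many schools per round. I would therefore redefine $A_n$ as the event that the threshold of $a_n$ unseen schools is reached before any student has inspected $s_n$ schools. On $A_n$ the comparison gives $(p_n-o(1))n\log n$. On $A_n^c$, some student has inspected $s_n$ schools while more than $a_n$ schools were still unseen. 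Since unseen schools are vacant and the market is balanced, more than $a_n$ students are unmatched, and that student was active throughout. I would then argue that at least $a_n$ students were unmatched during each of her at least $s_n/\text{(inspections per round)}$ rounds. Since each active student makes at least one inspection per round, this yields at least $a_n$ inspections per round. To close the argument I would take $A_n$ to be ``reached by round $s_n$'' as in the paper and separately bound $K_j$ using $K_j\le$ (rounds) $+$ (skips). Skips are inspections of filled schools, which I expect to be negligible before the threshold because few schools are filled relative to $n$ only early on; this is the step I would check most carefully.
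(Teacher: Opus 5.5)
Your upper bound is sound but takes a different route from the paper. An unseen school is vacant and so is never skipped, so the hazard bound $U/(n-K)\ge U/n$ carries over to inspections. Hence at most $nH_n$ inspections are expected before every school has been inspected. Split at that stopping time, not at ``the last round'', which is not identified at its start. The at most $V$ students still to move in that round then need expected $(n-K_i+1)/(V+1)\le n/V$ inspections each, giving $nH_n+n$. Drop the ``successive decrements of $V$'' sketch: several hits on one vacant school remove only one vacancy. The paper instead bounds each round's expected cost by $n$ via Lemma~\ref{lem:skip-exposure}, and the expected number of nonempty rounds by $\log n+O(1)$ via \eqref{eq:skip-contraction}.

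The lower bound has a genuine gap, exactly at the step you flag. Inequality \eqref{eq:amnesia-comparison} needs every pre-threshold inspection to come from a student with $K_j=o(n)$, and with skips this is no longer automatic. In a round that starts with $m$ vacant schools, an active student inspects about $n/m$ positions. Near the threshold that is of order $n/(\log n)^2$ per round, with geometric tails.

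Neither of your repairs closes this.
\begin{itemize}
\item Suppose you redefine $A_n$ as ``no student has yet inspected $s_n$ positions''. Then the bound on $A_n^c$ fails. Reaching $s_n$ inspections says nothing about having lived through $s_n$ rounds, and the mechanism typically lasts only about $\log n$ rounds. So ``at least $a_n$ inspections per round'' yields roughly $a_n\log n$, not $a_ns_n$. You would instead have to prove $\Prb(A_n^c)\to0$.
\item Suppose you keep the paper's $A_n$. Then the claim that skips are negligible is false. A round with $M_r$ vacant schools has $M_r$ applications but about $n$ inspections, so skips are nearly all inspections. A student alive in a late round has already skipped on the order of $n/M_r$ schools.
\end{itemize}
Moreover, the paper's $A_n$ has probability tending to one here, so the $A_n^c$ branch carries no weight. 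Everything therefore hinges on the uncontrolled $K_j$.

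The missing ingredient is the round structure of Lemma~\ref{lem:skip-exposure}. Conditional on the history, the $m$ active students' applications are i.i.d.\ uniform over the $m$ vacant schools. So $M_{r+1}$ is the number of empty bins, concentrated near $m/e$ with exponentially small tails by \eqref{eq:skip-concentration}. Each round also costs exactly $(m(n+1)-K_r)/(m+1)$ expected inspections.

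The paper uses this to bypass the amnesiac coupling entirely. With high probability there are at least $\frac{1-\delta-\eta}{1+\gamma}\log n$ rounds with $M_r\ge n^\delta$. Each costs $n(1-o(1))$ expected inspections once the term $K_r/(m+1)\le n^{-\delta}T^{\tau}_{r-1}$ is absorbed. That contraction could also be used to bound each student's cumulative depth and show $\Prb(A_n^c)\to0$ for your redefined event. Your proposal supplies neither argument.
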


The intuition is again simple. Suppose that $m$ students and $m$ schools remain at the beginning of a round. Each student's most-preferred remaining school is independently uniform among those $m$ schools. Applications are therefore equivalent to throwing $m$ balls independently into $m$ bins. The students who remain unmatched are exactly the empty bins, so the market contracts by a factor approaching $e^{-1}$ in every round and lasts about $\log n$ rounds. A surviving student scans roughly $n/m$ new positions before finding one of the $m$ vacant schools, so the $m$ active students inspect roughly $n$ positions in total during each substantive round. The total rank sum is therefore again approximately $n\log n$. Appendix~\ref{app:skip} gives the proof.

\subsection{Many-to-one matching}
\label{sec:quotas}

A careful reader will note that DA and IA are mainly used for school choice, where schools have many seats rather than a single one, and may wonder whether Theorem~\ref{thm:main} extends to a many-to-one model in which each school has a quota \(q\) of available seats. The answer is yes. Consider a market with \(m\) schools, each with quota \(q\), and \(mq\) students whose preferences over schools are drawn independently and uniformly at random. In this set-up, we obtain the following result:

\begin{theorem}
	\label{thm:manytoone}
	For every fixed quota \(q\), every sequence of school-priority profiles \((\triangleright_m)\), and each \(A\in\{\IA,\DA\}\),
	\[
	\lim_{m\to\infty}
	\frac{
		\E_{\triangleright_m}
		\left[\overline{\rk}^{A}_m\right]
	}{
		\dfrac{\log m}{q}
	}
	=1.
	\]
	Consequently,
	\[
	\lim_{m\to\infty}
	\frac{
		\E_{\triangleright_m}
		\left[\overline{\rk}^{\IA}_m\right]
	}{
		\E_{\triangleright_m}
		\left[\overline{\rk}^{\DA}_m\right]
	}
	=1.
	\]
\end{theorem}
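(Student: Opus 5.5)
The plan is to reuse the coupon-collection coupling of Section~\ref{sec:sequential} for both mechanisms at once, with \emph{filled} replacing \emph{seen} for the upper bound and \emph{seen} kept for the lower bound. With $mq$ students, a student assigned her $r$-th choice makes exactly $r$ applications under either mechanism (DA proposals included). So $T^{A}_m = mq\,\overline{\rk}^{A}_m$, and it suffices to show $\E[T^A_m]=(1+o(1))\,m\log m$ uniformly in priorities.

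Both mechanisms are run in rounds in the sequential form of Section~\ref{sec:sequential-implementation}. For DA these are the simultaneous Gale--Shapley rounds, with tentative holds replacing permanent acceptances; by the McVitie--Wilson equivalence this does not change the outcome. Each student's preferences are generated from an amnesiac stream $X_{i1},X_{i2},\dots$ of i.i.d.\ uniform schools exactly as before, so the pooled raw ticks form an ordinary i.i.d.\ coupon collector over $m$ coupons.

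The structural fact I will check first is this: in both IA and DA with quotas, a school rejects a student only when it is full (IA: more applicants than remaining seats in a round fills it permanently; DA: a school rejects only while holding $q$ students, and stays full thereafter). Hence a school that is still unfilled has never rejected anyone. Any raw tick landing on an unfilled school $s$ therefore comes from a student who has never tried $s$, and so it is a genuine application to $s$.

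\textbf{Upper bound.} By the structural fact, once every school has appeared $q$ times in the pooled raw stream, every school has received at least $q$ genuine applications from distinct students. In a balanced market every seat is then filled, modulo finishing the current round, which costs at most $mq$ extra applications. Genuine applications never exceed raw ticks, so
\[
T^A_m\le C^{(q)}_m+mq,
\]
where $C^{(q)}_m$ is the $q$-fold coupon-collection time. The classical Newman--Shepp estimate gives $\E[C^{(q)}_m]=m\log m+(q-1)m\log\log m+O(m)$, and since $q$ is fixed, dividing by $mq$ yields the upper bound $(1+o(1))(\log m)/q$.

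\textbf{Lower bound.} I copy the quick-versus-slow argument verbatim, with the same thresholds $a_m=\lceil(\log m)^2\rceil$ and $s_m=\lceil 2m/\log m\rceil$. Let $A_m$ be the event that at most $a_m$ schools are still unseen by the end of round $s_m$.
\begin{itemize}
\item On $A_m$, every genuine application before the threshold is made by a student with $K_j\le s_m-1$. The geometric-waiting identity $\E[W_j\mid\mathcal H_m]=m/(m-K_j)$ and \eqref{eq:partial-L1} then give a contribution of at least $(p_m-o(1))\,m\log m$, as in \eqref{eq:amnesia-good-event}.
\item On $A_m^c$, more than $a_m$ schools are unseen, hence hold nobody, throughout the first $s_m$ rounds. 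By balance at least $q a_m\ge a_m$ students are unheld and apply in each such round. For DA this uses that unheld students are precisely those who propose in the next round. Hence $T^A_m\ge a_m s_m=(2+o(1))m\log m$.
\end{itemize}
Combining the two cases gives $\E[T^A_m]\ge(1-o(1))m\log m$, and the ratio statement follows immediately from the two limits.

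\textbf{Main obstacle.} I expect the delicate point to be the round-based treatment of DA in the lower bound. Wilson's arbitrary-order implementation gives no control on $K_j$, so I must verify two things. First, that simultaneous-round DA is a legitimate implementation to which the amnesiac coupling applies with history-dependent ordering. Second, that ``unseen school $\Rightarrow$ at least $q$ unheld students who propose next round'' holds at every round. A secondary check is that the structural fact above is exactly what licenses the upper-bound comparison with the $q$-fold collector despite students' streams repeating schools.
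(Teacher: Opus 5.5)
Your proposal is correct. The lower bound is the paper's quick-versus-slow argument: the same $a_m$ and $s_m$, the bound $K_j\le s_m-1$ from one proposal per round, and the unseen-school count on $A_m^c$. The two obstacles you flag are handled exactly as you suggest, since simultaneous-round DA is Gale--Shapley's own formulation and unseen schools hold nobody.

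Where you differ is the upper bound. The paper does not return to the raw stream there. It uses the fact that a below-quota school is untried by every active student to get a conditional hazard $1/(m-K)\ge 1/m$ for each genuine application. It then couples with $\operatorname{Bin}(t,1/m)$, takes a union bound over schools, and sums tails to obtain $\E[C_m]\le t_m+O_Q(m)$.

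You instead reuse the amnesiac coupling. You observe that a repeated raw tick on $s$ can occur only after $s$ is already full. Hence the $q$-fold collection time of the i.i.d.\ pooled stream dominates the time at which every school has $q$ genuine applications; the stream must be extended by fresh draws past termination for this to be well defined. You then cite Newman--Shepp.

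The structural fact is the same in both routes, used differently. Your version is more economical: one coupling serves both bounds, and the computation is outsourced to a classical i.i.d.\ result. The paper's version is self-contained and works directly with genuine applications, without appealing to the double Dixie cup. It also gives the heterogeneous case $1\le q_s\le Q$ with an explicit $(Q-1)m\log\log m$ term; your argument extends to that case by comparison with the $Q$-fold collector.
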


The proof follows the same coupon-collector logic. Under either mechanism, a student applies down her preference list without ever applying twice to the same school, so the total number of applications equals the sum of assigned-school ranks. The mechanism cannot terminate until every school has received at least \(q\) applications. This is analogous to a coupon collector who must collect \(q\) copies of every coupon.

For fixed \(q\), finding every school for the first time already requires approximately \(m\log m\) applications. Collecting the remaining \(q-1\) applications at each school requires only a lower-order number of additional applications. The total number of applications under either IA or DA is therefore asymptotically \(m\log m\). Since there are \(mq\) students, expected average rank is asymptotically \(\dfrac{\log m}{q}\).

Appendix~\ref{app:quotas} provides the formal proof. It also establishes the more general result in which schools may have different quotas, provided that all quotas remain bounded as the market grows.

\section{Conclusion}
\label{sec:discussion}

Immediate Acceptance does not respect priorities and is manipulable, both in theory and the field \citep{abdulkadirouglu2003,chen2006school}. These flaws can sometimes be arguably tolerated because of its Pareto efficiency. Here, we have shown that IA's efficiency guarantee falls short of the best average rank attainable, which is constant. Not only its average rank grows with the number of students, but it grows at the same rate as the average placement of Deferred Acceptance, and this is without even accounting for how strategic behavior may further distort it. These results do not imply that IA never produces meaningful finite-market gains, but they substantially weaken the case for accepting its incentive and stability costs in exchange for better student placements.

\section*{Acknowledgments}
I am grateful to Herv\'e Moulin for useful conversations on this subject.

\newpage
\bibliographystyle{ACM-Reference-Format}
\bibliography{bibliogr}


\newpage
\appendix
\addtocontents{toc}{\protect\setcounter{tocdepth}{-1}}

\section{Proof for IA with skips}
\label{app:skip}

At the beginning of round $r$, let $I_r$ denote the unmatched students, let $S_r$ denote the vacant schools, and write
\[
M_r=|I_r|=|S_r|.
\]
For every $i\in I_r$, let $K_{i,r}$ be the number of positions already inspected in her preference list. Let $D_{i,r}$ be the number of new positions she inspects before reaching the first school in $S_r$, including that school. Define
\[
S_r^{\mathrm{ins}}=\sum_{i\in I_r}D_{i,r},
\qquad
T_n^{\mathrm{skip}}=\sum_{r\ge1}S_r^{\mathrm{ins}}.
\]
A student's assigned rank equals the total number of positions she inspects. Therefore
\begin{equation}\label{eq:skip-rank}
	T_n^{\mathrm{skip}}
	=n\overline{\rk}^{\IA^{\mathrm{skip}}}_n.
\end{equation}
Let $\mathcal F_r$ denote the history at the beginning of round $r$.

\begin{lemma}\label{lem:skip-exposure}
	Conditional on $\mathcal F_r$ and $M_r=m>0$, the $m$ applications are independent and uniform over the $m$ vacant schools. Moreover,
	\begin{equation}\label{eq:skip-round}
		\E[S_r^{\mathrm{ins}}\mid\mathcal F_r]
		=\frac{m(n+1)-K_r}{m+1},
		\qquad
		K_r=\sum_{i\in I_r}K_{i,r}.
	\end{equation}
\end{lemma}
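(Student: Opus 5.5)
The plan is to reduce everything to a statement about the unexposed suffix of each active student's preference list, exactly as in the proof of Lemma~\ref{lem:hazard}. Fix round $r$ and condition on $\mathcal F_r$ with $M_r=m>0$. The first step is to check that the history $\mathcal F_r$ depends on student $i$'s preferences only through the prefix of her list that she has already inspected, namely its first $K_{i,r}$ positions. This holds because $\IA^{\mathrm{skip}}$ reads a student's list one position at a time and never looks beyond the position currently inspected.

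Since $\succ_i$ is uniform over the $n!$ strict orders and independent across students, a standard consequence follows. Conditional on $\mathcal F_r$, the order in which student $i$ ranks her $n-K_{i,r}$ uninspected schools is a uniformly random permutation of those schools. These permutations are independent across $i\in I_r$, and independent of everything in $\mathcal F_r$ beyond the identity of the exposed prefixes.

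The second step is to verify that every vacant school lies in each active student's uninspected set, that is, $S_r\cap\{\text{schools inspected by } i\}=\emptyset$ for every $i\in I_r$. Each position that student $i$ inspected before round $r$ falls into one of two cases:
\begin{itemize}
\item It was skipped because the school was already full at that time, so that school is not in $S_r$.
\item It was a school she applied to and was rejected by. In $\IA^{\mathrm{skip}}$, an application is made only to a vacant school, and that school accepts its highest-priority applicant that round. A rejection therefore means the school became full in that round, so it is again not in $S_r$.
\end{itemize}
Schools never become vacant again, so all $m$ vacant schools sit inside the uniformly permuted uninspected set of size $N_i=n-K_{i,r}$.

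Given these two steps, student $i$'s application in round $r$ is the first element of $S_r$ in a uniform random ordering of a set of size $N_i$ that contains $S_r$. By symmetry this first element is uniform over $S_r$. Independence of the permutations across students then gives the first claim: the $m$ applications are independent and uniform over the $m$ vacant schools.

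For the expectation, $D_{i,r}$ is the position of the first of $m$ marked items in a uniform permutation of $N_i$ items. The $m$ marked items split the $N_i-m$ unmarked ones into $m+1$ gaps that are exchangeable in expectation, so
\[
\E[D_{i,r}\mid\mathcal F_r]=\frac{N_i-m}{m+1}+1=\frac{n-K_{i,r}+1}{m+1}.
\]
Summing over the $m$ students in $I_r$ yields $\bigl(m(n+1)-K_r\bigr)/(m+1)$, which is \eqref{eq:skip-round}.

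The main obstacle is the first step. We need to justify rigorously that conditioning on the whole history, which includes other students' actions and the priority-driven acceptance outcomes, leaves each suffix uniform and the suffixes mutually independent. I would handle this by showing that, for any fixed exposed prefixes, the event $\mathcal F_r$ is a product event over students in which each factor depends only on that student's prefix. Conditioning on such an event preserves the uniform, independent distribution of the suffixes.
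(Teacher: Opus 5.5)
Your proposal is correct and takes the same route as the paper: vacant schools lie in every active student's unexposed suffix, these suffixes are conditionally independent uniform permutations, so each first vacant school is uniform over $S_r$ with negative-hypergeometric mean position $(n-K_{i,r}+1)/(m+1)$, and summing over the $m$ active students gives the formula. Your version adds details the paper only asserts: the case analysis showing every inspected school is full, the product-event structure of the history, and the gap argument for the mean.
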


\begin{proof}
	Every vacant school belongs to every active student's unexposed preference-list suffix. Conditional on the revealed history, these suffixes are independent uniform permutations. Each student's first vacant school is therefore uniform over $S_r$, independently across students. Its position in a suffix of length $n-K_{i,r}$ containing $m$ vacant schools has a negative-hypergeometric mean
	\[
	\frac{n-K_{i,r}+1}{m+1}.
	\]
	Summing over active students gives \eqref{eq:skip-round}.
\end{proof}

Conditional on $M_r=m$, the next survivor count is the number of empty bins after $m$ independent balls are placed uniformly into $m$ bins. In particular,
\begin{equation}\label{eq:skip-contraction}
	\E[M_{r+1}\mid M_r=m]
	=m\left(1-\frac1m\right)^m
	\le\frac{m}{e}.
\end{equation}
Moreover, the number of empty bins is a bounded-differences function of the $m$ ball locations. Since its expectation is asymptotic to $m/e$, for every fixed $\gamma>0$ there is a constant $c_\gamma>0$ such that, for all sufficiently large $m$,
\begin{equation}\label{eq:skip-concentration}
	\Prb\!\left(
	M_{r+1}<e^{-(1+\gamma)}m
	\mid M_r=m
	\right)
	\le e^{-c_\gamma m}.
\end{equation}

For the upper bound, \eqref{eq:skip-round} implies
\[
\E[S_r^{\mathrm{ins}}\mid\mathcal F_r]\le n
\]
whenever the round is non-empty. Iterating \eqref{eq:skip-contraction} gives $\E[M_r]\le ne^{-(r-1)}$. By Markov's inequality, $\Prb(M_r>0)\le\min\{1,\E[M_r]\}$, so the expected number of nonempty rounds is at most
\[
\sum_{r\ge1}\min\{1,ne^{-(r-1)}\}
=\log n+O(1).
\]
Therefore
\begin{equation}\label{eq:skip-upper}
	\E[T_n^{\mathrm{skip}}]
	\le n\log n+O(n).
\end{equation}

For the lower bound, fix $\delta,\eta,\gamma>0$ with $\delta+\eta<1$, and let
\begin{equation}\label{eq:skip-rounds}
	R_n=\left\lfloor
	\frac{1-\delta-\eta}{1+\gamma}\log n
	\right\rfloor.
\end{equation}
Define the stopping round
\[
\tau_n=\inf\{r:M_r<n^\delta\}.
\]
Applying \eqref{eq:skip-concentration} while $M_r\ge n^\delta$ and taking a union bound over the first $R_n$ rounds imply
\begin{equation}\label{eq:skip-survive}
	\Prb(\tau_n>R_n)\longrightarrow1.
\end{equation}

Let
\[
T_r^\tau=\sum_{s\le r}S_s^{\mathrm{ins}}\mathbf 1_{\{s<\tau_n\}}.
\]
On $\{r<\tau_n\}$, we have $M_r\ge n^\delta$. Moreover, the entries previously exposed by the surviving students are among all entries inspected in earlier rounds, so $K_r\le T_{r-1}^\tau$. Equation~\eqref{eq:skip-round} therefore gives
\[
\E\!\left[
S_r^{\mathrm{ins}}\mathbf 1_{\{r<\tau_n\}}
\mid\mathcal F_r
\right]
\ge
\mathbf 1_{\{r<\tau_n\}}
\left[n-O(n^{1-\delta})-n^{-\delta}T_{r-1}^\tau\right].
\]
Summing for $r\le R_n$, taking expectations, and using \eqref{eq:skip-survive},
\[
\E[T_{R_n}^\tau]
\ge
nR_n(1-o(1))
-R_n\,O(n^{1-\delta})
-R_n n^{-\delta}\E[T_{R_n}^\tau].
\]
Since $R_n n^{-\delta}=o(1)$ and $R_n n^{1-\delta}=o(n\log n)$,
\[
\E[T_n^{\mathrm{skip}}]
\ge
\E[T_{R_n}^\tau]
\ge nR_n(1-o(1)).
\]
Thus
\[
\liminf_{n\to\infty}
\frac{\E[T_n^{\mathrm{skip}}]}{n\log n}
\ge
\frac{1-\delta-\eta}{1+\gamma}.
\]
Letting $\delta,\eta,\gamma\to 0$ and combining this inequality with \eqref{eq:skip-upper} and \eqref{eq:skip-rank} proves Theorem~\ref{prop:skip}.
\newpage
\section{Proof for many-to-one markets with bounded quotas}
\label{app:quotas}

There are now $m$ schools. School $s$ has a positive integer quota $q_s$, and the number of students equals the total capacity:
\begin{equation}\label{eq:total-capacity}
	N_m=\sum_{s\in S}q_s.
\end{equation}
Student preferences are independently and uniformly distributed over the $m!$ strict orders of the schools. School priorities remain arbitrary and fixed.

Under IA, each school permanently accepts up to its remaining capacity from the students who apply in a round. Under DA, each school tentatively holds its $q_s$ highest-priority proposers. Under either mechanism, students apply in strict preference order and never apply twice to the same school. A displacement under DA simply causes the displaced student to continue farther down her list. Hence, for $A\in\{\IA,\DA\}$,
\begin{equation}\label{eq:quota-rank}
	T_m^A
	=\sum_{i=1}^{N_m}\rk_i[A_i(P)]
	=N_m\overline{\rk}_m^A,
\end{equation}
where $T_m^A$ denotes the total number of applications.

\begin{proposition}\label{prop:manytoone}
	Fix $Q<\infty$. For every sequence of quota profiles satisfying $1\le q_s\le Q$, every sequence of school-priority profiles $(\triangleright_m)$, and each $A\in\{\IA,\DA\}$,
	\begin{equation}\label{eq:quota-total}
		\lim_{m\to\infty}
		\frac{
			\E_{\triangleright_m}[T_m^A]
		}{
			m\log m
		}
		=1.
	\end{equation}
	More precisely,
	\begin{equation}\label{eq:quota-upper}
		\E_{\triangleright_m}[T_m^A]
		\le
		m\log m+(Q-1)m\log\log m+O_Q(m).
	\end{equation}
	Consequently,
	\begin{equation}\label{eq:quota-average}
		\lim_{m\to\infty}
		\frac{
			\E_{\triangleright_m}[\overline{\rk}_m^A]
		}{
			\dfrac{\log m}{\overline q_m}
		}
		=1,
		\qquad
		\overline q_m=\frac{N_m}{m}.
	\end{equation}
	In particular, if every school has the same fixed quota $q$, then
	\[
	\lim_{m\to\infty}
	\frac{
		\E_{\triangleright_m}[\overline{\rk}_m^{\IA}]
	}{
		\dfrac{\log m}{q}
	}
	=
	\lim_{m\to\infty}
	\frac{
		\E_{\triangleright_m}[\overline{\rk}_m^{\DA}]
	}{
		\dfrac{\log m}{q}
	}
	=1.
	\]
\end{proposition}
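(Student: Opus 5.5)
We prove Proposition~\ref{prop:manytoone} by comparing the application count with a coupon collector that must collect school $s$ at least $q_s$ times. By \eqref{eq:quota-rank}, it suffices to control $\E[T_m^A]$. Dividing by $N_m=m\overline q_m$ then gives \eqref{eq:quota-average}. The equal-quota statement follows by setting $\overline q_m=q$.

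\textbf{Upper bound.} Both mechanisms terminate once every school $s$ has received $q_s$ applications. For DA this is because a school then holds $q_s$ students, and total capacity equals $N_m$, so nobody remains unmatched. For IA a completing round adds at most $N_m\le Qm=O_Q(m)$ further applications.

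Run the application process sequentially in any order consistent with the mechanism: rounds for IA, Wilson's queue for DA. The analogue of Lemma~\ref{lem:hazard} applies. Call a school \emph{deficient} if it has received fewer than $q_s$ applications. Every deficient school that the current student has not tried lies in her unexposed suffix, so the next application hits a deficient untried school with probability $(\text{count})/(m-K)$.

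The student may already have tried a deficient school, which breaks the clean comparison. I would therefore use the amnesiac coupling. Each raw i.i.d.\ draw is either a genuine application or a repeat. Hence the genuine count is at most the number of raw ticks until each school has appeared $q_s$ times in the pooled stream, provided that each raw appearance of $s$ in the pooled stream yields, or precedes, a genuine application to $s$. This holds: a repeat draw of $s$ by student $i$ means $i$ already applied to $s$, and that application was itself counted as a genuine application to $s$. Distinct raw appearances of $s$ from distinct students give distinct applications. Repeats by the same student do not, so the argument needs care here.

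A cleaner route is the following. Let each student's stream determine her order. The pooled stream reaches $q_s$ copies of each $s$ within the time it takes the pooled \emph{genuine} applications to do so, since genuine applications form a subsequence of ticks. So $T\le$ (ticks) trivially, and ticks per genuine application have mean $m/(m-K)$. This gives an upper bound on ticks, not on $T$, which is the wrong direction.

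So for the upper bound I would instead argue directly. While $u$ schools are deficient, the current student has tried at most $q_s-1\le Q-1$ applications' worth of those schools. This is false per student in general, but deficient schools received fewer than $Q$ applications in total, so across all students at most $(Q-1)u$ pairs (student, deficient school) have been tried. I would condition on the number of deficient schools and bound the hazard below by $(u-\#\text{tried})/m$.

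The alternative I would actually pursue is the standard multi-copy collector bound. Decompose the time into phase one, first appearance of all schools, with expectation at most $mH_m$ by \eqref{eq:discovery-upper}, plus later phases. Alternatively, use the generating-function estimate $\E[\text{time to } q \text{ copies}]=m\log m+(q-1)m\log\log m+O_q(m)$ (Newman–Shepp), transferred via the hazard comparison. The main obstacle is justifying the hazard comparison for the repeat-copy phases.

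\textbf{Lower bound.} The lower bound copies Section~\ref{sec:lower} exactly. Every school must first receive one application, so $T_m^A\ge D_m$, the first-coverage count. For DA, Knuth's argument, or the same coupling, shows that ticks and genuine applications differ negligibly while students have tried at most $s_m\ll m$ schools. For IA, use the quick-versus-slow dichotomy with $a_m=\lceil(\log m)^2\rceil$ and $s_m=\lceil 2m/\log m\rceil$. On the slow event, at least $a_m$ seats stay empty, so at least $a_m$ students apply in each round. For DA, a single queue replaces rounds. I would define the good event as ``no student has made more than $s_m$ proposals before threshold''. On its complement, some student has made at least $s_m$ proposals. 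That alone is only $s_m$ applications, which is too few, so for DA I would instead rely on the classical Wilson–Knuth–Pittel lower bound, which extends to $q\ge1$ because first coverage already forces about $m\log m$ applications.
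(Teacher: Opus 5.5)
\textbf{Upper bound.} This part is not proved. You name ``justifying the hazard comparison for the repeat-copy phases'' as the main obstacle and leave it open. The situation you worry about, an active student who has already tried a school still below quota, cannot occur, and that observation is the missing idea.
\begin{itemize}
\item Under IA, a student rejected by $s$ was rejected because $s$ filled in that round, so $s$ already has at least $q_s$ applications.
\item Under DA, a school with fewer than $q_s$ proposals holds every proposer, so everyone who proposed to it is still held there.
\end{itemize}
Hence, while $s$ is deficient, it lies in the current student's unexposed suffix, and the next application goes to $s$ with probability $1/(m-K)\ge 1/m$. The paper couples this per-school hazard with uniforms to get $\Prb(\tau_s>t)\le\Prb(\operatorname{Bin}(t,1/m)<q_s)$. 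It then takes a union bound over the $m$ schools and sums the tail $c_Qe^{-t/m}(1+t/m)^{Q-1}$ beyond $t_m=\lceil m(\log m+(Q-1)\log\log m)\rceil$, which yields \eqref{eq:quota-upper}.

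Your substitutes do not reach this. The pair count $(Q-1)u$ gives no per-student control; it already exceeds $u$ when $Q\ge 2$. Even a correct aggregate hazard for hitting \emph{some} deficient school only gives a bound of order $Qm\log m$ when tracked through the total deficit $d$ with $u\ge d/Q$. ``Transferring Newman--Shepp via the hazard comparison'' likewise needs either per-school tails plus a union bound or a joint coupling. With the observation, your first amnesiac idea also closes. A repeat draw of $s$ is impossible while $s$ is deficient, so the $q_s$-th genuine application to $s$ occurs at the tick of the $q_s$-th raw appearance of $s$. The genuine count is therefore at most the i.i.d.\ multi-copy collection time.

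\textbf{Lower bound for DA.} Your IA lower bound matches the paper's. For DA you drop the quick-versus-slow dichotomy because Wilson's queue has no rounds. You instead assert that the classical one-to-one lower bound ``extends to $q\ge1$ because first coverage already forces about $m\log m$ applications.'' That is precisely what must be proved. Genuine applications are drawn without replacement within each student's list, so first coverage could a priori be substantially faster than the i.i.d.\ collector, and the cited results concern one-to-one markets.

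The fix is to run DA in its own simultaneous rounds, revealed sequentially within each round. By \eqref{eq:quota-rank}, $T_m^{\DA}$ depends only on the final matching, so the implementation is immaterial. Each unheld student proposes exactly once per round, which gives both halves of the dichotomy:
\begin{itemize}
\item On $A_m$, every application before the threshold comes from a student who has tried at most $s_m-1$ schools, so \eqref{eq:quota-amnesia-comparison} applies.
\item On $A_m^c$, the number of unheld students equals the number of empty seats. This exceeds $a_m$ because every unseen school is empty, so $T_m^{\DA}\ge a_ms_m=(2+o(1))m\log m$.
\end{itemize}
This is how the paper handles IA and DA uniformly.
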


The leading term is still the cost of reaching the final schools for the first time. Requiring at most $Q-1$ additional applications at each school affects only the lower-order term. The i.i.d. analogue is the double Dixie cup problem of \citet{newman1960double}; the proof below does not assume that the mechanisms generate i.i.d.\ applications.

\begin{proof}
	We begin with the upper bound. For every school $s$, let $\tau_s$ be the index of its $q_s$-th application, and define
	\[
	C_m=\max_{s\in S}\tau_s.
	\]
	Once each school has received at least its quota of applications, every seat fills at the end of the current round. The only possible slack consists of the remaining applications in that round. Therefore
	\begin{equation}\label{eq:quota-round-slack}
		C_m\le T_m^A\le C_m+N_m.
	\end{equation}
	
	Before school $s$ receives its $q_s$-th application, it is untried by the student making the next application. Under IA, a student who previously applied to $s$ would either have been accepted or would have seen the school fill in that round. Under DA, a school that has received fewer than $q_s$ proposals holds every proposer. In either case, a student who previously applied to $s$ cannot remain active while the school is below quota.
	
	If the current student has already tried $K$ schools, conditional uniformity of her unexposed preference-list suffix gives
	\begin{equation}\label{eq:quota-single-hazard}
		\Prb(\text{the next application is to }s\mid\mathcal F)
		=\frac{1}{m-K}
		\ge\frac1m
	\end{equation}
	whenever fewer than $q_s$ applications have reached $s$.
	
	To make the domination explicit, couple the successive applications with independent uniforms. While school $s$ has received fewer than $q_s$ applications, every comparison success with threshold $1/m$ is also an actual application to $s$, because the latter has conditional probability at least $1/m$. Therefore
	\begin{equation}\label{eq:quota-binomial-domination}
		\Prb(\tau_s>t)
		\le
		\Prb\!\left(
		\operatorname{Bin}\!\left(t,\frac1m\right)<q_s
		\right).
	\end{equation}
	Let $B_{t,m}\sim\operatorname{Bin}(t,1/m)$. Since $q_s\le Q$, a union bound gives
	\begin{equation}\label{eq:quota-union}
		\Prb(C_m>t)
		\le m\Prb(B_{t,m}<Q).
	\end{equation}
	For a constant $c_Q$ depending only on $Q$, and every $t\ge m$,
	\begin{align}
		\Prb(B_{t,m}<Q)
		&=\sum_{j=0}^{Q-1}\binom tjm^{-j}
		\left(1-\frac1m\right)^{t-j} \notag\\
		&\le
		c_Qe^{-t/m}\left(1+\frac tm\right)^{Q-1}.
		\label{eq:quota-binomial-tail}
	\end{align}
	Set
	\begin{equation}\label{eq:quota-threshold}
		t_m=\left\lceil
		m\bigl(\log m+(Q-1)\log\log m\bigr)
		\right\rceil.
	\end{equation}
	For large $m$, the function $e^{-x}(1+x)^{Q-1}$ is decreasing for $x\ge t_m/m$. Equations~\eqref{eq:quota-union}-\eqref{eq:quota-binomial-tail} and an integral comparison imply
	\begin{align*}
		\sum_{t\ge t_m}\Prb(C_m>t)
		&\le
		c_Qm\sum_{t\ge t_m}
		e^{-t/m}\left(1+\frac tm\right)^{Q-1}\\
		&=O_Q(m),
	\end{align*}
	because
	\[
	e^{-t_m/m}\left(1+\frac{t_m}{m}\right)^{Q-1}
	=O_Q\!\left(\frac1m\right).
	\]
	Therefore
	\[
	\E_{\triangleright_m}[C_m]
	=\sum_{t\ge0}\Prb(C_m>t)
	\le t_m+O_Q(m).
	\]
	Since $N_m\le Qm$, equation~\eqref{eq:quota-round-slack} proves \eqref{eq:quota-upper}.
	
	We now prove the matching lower bound. Use the same amnesiac construction with $m$ schools and reveal either mechanism sequentially within its ordinary rounds. For every student, generate an infinite i.i.d.\ uniform sequence of schools and construct her preference order from the first appearances of distinct schools. Raw draws are amnesiac ticks, and repeated schools are ignored until the next untried school appears. The pooled tick sequence is i.i.d.\ uniform, and the first global appearance of every school is its first genuine application under the mechanism.
	
	Let $C_{m,a}$ be the number of ticks required until at most $a$ schools remain unseen. As in \eqref{eq:partial-coupon}-\eqref{eq:partial-var},
	\[
	C_{m,a}\ \overset{d}=\
	\sum_{u=a+1}^{m}G_u,
	\qquad
	G_u\sim\operatorname{Geom}\!\left(\frac{u}{m}\right),
	\]
	with independent summands. For
	\[
	a_m=\left\lceil(\log m)^2\right\rceil,
	\]
	we therefore have
	\begin{equation}\label{eq:quota-partial-L1}
		\E\left|
		\frac{C_{m,a_m}}{m\log m}-1
		\right|\longrightarrow0,
	\end{equation}
	because
	\[
	\E_{\triangleright_m}[C_{m,a_m}]
	=m\log m-2m\log\log m+O(m)
	\]
	and $\operatorname{Var}(C_{m,a_m})=O(m^2/a_m)$.
	
	Set
	\[
	s_m=\left\lceil\frac{2m}{\log m}\right\rceil
	\]
	and let $A_m$ be the event that the mechanism reaches at most $a_m$ unseen schools by the end of round $s_m$. Let $T_{m,a_m}^A$ denote the number of genuine applications made up to and including the application at which this threshold is first reached.
	
	On $A_m$, every such application occurs by round $s_m$. Under both IA and DA, an active student makes at most one application in each round. She has therefore tried at most $s_m-1$ schools. Let $\mathcal H_m$ be the sigma-field generated by the induced preferences and the resulting mechanism history. By the same first-appearance argument used in the proof of Theorem~\ref{thm:main}, conditional on $\mathcal H_m$ the amnesiac waiting time preceding an application has mean $m/(m-K)$, where $K\le s_m-1$ on $A_m$. Summing these conditional means gives
	\begin{equation}\label{eq:quota-amnesia-comparison}
		\E_{\triangleright_m}[C_{m,a_m}\mathbf 1_{A_m}]
		\le
		\frac{1}{1-s_m/m}
		\E_{\triangleright_m}[T_{m,a_m}^A\mathbf 1_{A_m}].
	\end{equation}
	If $p_m=\Prb(A_m)$, equations~\eqref{eq:quota-partial-L1} and \eqref{eq:quota-amnesia-comparison} imply
	\begin{equation}\label{eq:quota-good-event}
		\E_{\triangleright_m}[T_m^A\mathbf 1_{A_m}]
		\ge
		\bigl(p_m-o(1)\bigr)m\log m.
	\end{equation}
	
	On $A_m^c$, more than $a_m$ schools remain unseen throughout each of the first $s_m$ rounds. Every such school has at least one empty seat. Because the market is balanced, the number of active students equals the number of empty seats and is therefore at least the number of unseen schools; every active student makes one application in the round. Therefore
	\begin{equation}\label{eq:quota-bad-event}
		T_m^A
		\ge a_ms_m
		=(2+o(1))m\log m
		\qquad\text{on }A_m^c.
	\end{equation}
	Combining \eqref{eq:quota-good-event} and \eqref{eq:quota-bad-event},
	\[
	\frac{\E_{\triangleright_m}[T_m^A]}{m\log m}
	\ge p_m+2(1-p_m)-o(1)
	\ge1-o(1).
	\]
	Together with \eqref{eq:quota-upper}, this proves \eqref{eq:quota-total}. Equation~\eqref{eq:quota-average} follows from the rank identity \eqref{eq:quota-rank}.
\end{proof}

\end{document}